\newcommand{\f}[1]{\relax\ifmmode#1\else{$#1$}\fi}
\newcommand{\abs}[1]{ \lvert#1\rvert }
\newcommand{\xor}{\oplus}
\newcommand\Naturals{\mathbb{N}\xspace}
\newcommand{\Class}[1]{\textsf{#1}}
\newcommand{\UEOPL}{\Class{UEOPL}\xspace}
\newcommand{\PUEOPL}{\Class{PromiseUEOPL}\xspace}
\newcommand{\TFNP}{\Class{TFNP}\xspace}
\newcommand{\NP}{\Class{NP}\xspace}
\newcommand{\myProblem}[1]{\textsc{#1}}
\newcommand{\GridUSO}{\myProblem{Grid-USO}\xspace}
\newcommand{\UniqueForwardEOPL}{\myProblem{Unique Forward EOPL}\xspace}
\newcommand{\UFEOPLSolTypeEndOfLine}{\emph{(UF1)}\xspace}
\newcommand{\UFEOPLVioTypeNotALine}{\emph{(UFV1)}\xspace}
\newcommand{\UFEOPLVioBreakInLine}{\emph{(UFV2)}\xspace}
\newcommand{\GridUSOSolTypeEndOfLine}{\emph{(GU1)}\xspace}
\newcommand{\GridUSOVioSelfLoop}{\emph{(GUV1)}\xspace}
\newcommand{\GridUSOVioRefinedIndex}{\emph{(GUV2)}\xspace}
\newcommand{\Successor}{\f{S}\xspace}
\newcommand{\cost}{\f{c}\xspace}
\newcommand{\dimension}{\f{d}\xspace}
\newcommand{\length}{\f{n}\xspace}
\newcommand{\Grid}{\f{\Gamma}\xspace}
\newcommand{\gridwidth}{\f{\omega}\xspace}
\newcommand\Graph{\f{G}\xspace}
\newcommand\Vertices{\f{V}\xspace}
\newcommand\node{\f{v}\xspace}
\newcommand\vertex{\f{v}\xspace}
\newcommand\Edges{\f{E}\xspace}
\newcommand{\nodeA}{\f{v}\xspace}
\newcommand{\nodeB}{\f{w}\xspace}
\newcommand{\Menge}{\f{\mathcal{M}}\xspace}
\newcommand{\Partitions}{\f{K}\xspace}
\newcommand{\partition}{\f{\kappa}\xspace}
\newcommand{\subGrid}{\f{\Grid'}\xspace}
\newcommand{\outmap}{\f{\sigma}\xspace}
\newcommand{\Potenzmenge}[1]{\f{Powerset(#1)}\xspace}
\newcommand{\refinedIndex}{\f{r}\xspace}
\newcommand{\point}{\f{p}\xspace}
\newcommand{\pointA}{\f{p}\xspace}
\newcommand{\pointB}{\f{q}\xspace}
\newcommand{\isVertex}{\texttt{isVertex}\xspace}
\newcommand{\blueSink}{\f{x}\xspace}
\newcommand{\yellowSink}{\f{y}\xspace}
\newcommand{\GridA}{\f{\Grid'}\xspace}
\newcommand{\GridB}{\f{\Grid''}\xspace}
\newcommand{\ii}{\f{i}\xspace}
\renewcommand{\ij}{\f{j}\xspace}
\newcommand{\Instances}[1]{\f{\mathcal{I}^{#1}}\xspace}
\newcommand{\instance}{\f{I}\xspace}
\newcommand{\Solutions}[1]{\f{\mathcal{S}^{#1}}\xspace}
\newcommand{\searchProblem}{\f{R}\xspace}
\newcommand{\searchProblemA}{\f{\searchProblem}\xspace}
\newcommand{\searchProblemB}{\f{\searchProblem'}\xspace}
\newcommand{\solutionB}{\f{\solution'}\xspace}
\newcommand{\instanceA}{\f{\instance}\xspace}
\newcommand{\instanceB}{\f{\instance'}\xspace}
\newcommand{\help}{\f{h}\xspace}
\newcommand{\solution}{\f{s}\xspace}
\title{Unique Sink Orientations of Grids is in Unique End of Potential Line}
\titlerunning{\GridUSO is in \UEOPL}
\author{Michaela Borzechowski}{Institut f\"ur Informatik, Freie Universität 
Berlin, Germany}{michaela.borzechowski@fu-berlin.de}{}
{Supported by DFG within the Research Training Group GRK~2434 \emph{Facets of Complexity}.}
\author{Wolfgang Mulzer}{Institut f\"ur Informatik, Freie Universität 
Berlin, Germany}{mulzer@inf.fu-berlin.de}{https://orcid.org/0000-0002-1948-5840}
{Supported in part by ERC StG 757609.}
\authorrunning{M. Borzechowski and W. Mulzer}
\keywords{Unique sink orientation, UEOPL}
\begin{document}

\maketitle

\begin{abstract}
The complexity classes \emph{Unique End of Potential Line} (\UEOPL) and its promise version \PUEOPL were introduced in 2018 by Fearnly et al.~\cite{UEOPL2020}. \PUEOPL captures search problems where the instances are promised to have a unique solution. \UEOPL captures total search versions of these promise problems. The promise problems can be made total by defining \emph{violations} that are returned as a short certificate of an unfulfilled promise.

\GridUSO is the problem of finding the sink in a grid with a unique sink orientation. It was introduced by Gärtner et al.~\cite{GridUSO}. We describe a promise preserving reduction from \GridUSO to \UniqueForwardEOPL, a \UEOPL-complete problem. Thus, we show that \GridUSO is in \UEOPL and its promise version is in \PUEOPL.
\end{abstract}

\section{Introduction}

Many tasks in computer science are naturally formulated as \emph{search problems}, where the goal is to \emph{find} a ``solution'' for a given instance. \emph{Promise problems}, where it is guaranteed that we see only instances that have a certain property, are also an intuitive approach to formulate certain tasks. Nonetheless, standard complexity theory works with decision problems, and it may happen that the computational complexity of the decision problem and the search problem are not equivalent.
In particular, this is the case for problems for which it is guaranteed that a solution always exists.
The complexity of such \emph{total} search problems has been studied since at least 1991, when Megiddo and Papadimitriou defined the class \emph{Total Function NP} (\TFNP)~\cite{Weiss2}.
Here, we consider a subclass of \TFNP, namely the complexity class \emph{Unique End of Potential Line} (\UEOPL). It contains some interesting problems from computational geometry for which no polynomial time algorithm is known but which are unlikely to be \NP-hard, for example \myProblem{$\alpha$-Ham-Sandwich} \cite{HamSandwich} and \myProblem{Arrival} \cite{ARRIVAL3}. Currently \UEOPL contains one complete problem: \myProblem{One-Permutation-Discrete-Contraction}.
The problem \GridUSO is an abstraction of the simplex algorithm executed on a \emph{general P-Matrix linear complementarity problem}.
Since \UEOPL was introduced only recently, in 2018, by Fearnly et al.~\cite{UEOPL2020}, the knowledge about this class is still very limited. 
We show that the problem \GridUSO lies in \UEOPL, making progress towards elucidating the nature of this class
and the problems in it. In particular, with more problems that are known to lie in \UEOPL, it becomes more likely that additional complete problems are found.

\section{Unique End of Potential Line}

A \UniqueForwardEOPL instance is defined by two circuits $\Successor: \{0, 1\}^d \rightarrow \{0, 1\}^\dimension$ and $\cost: \{0, 1\}^d \rightarrow \{0, 1\}^m$.
A circuit is a compact polynomial sized representation of information which in a map would be exponentially large (for a more formal definition, see \cite[Definition 6.1]{ComputationalComplexityModern}).
The circuits represent a directed graph \Graph. The vertices of \Graph are bit strings $\nodeA \in \{0, 1\}^\dimension$ with $\Successor(\nodeA) \neq \nodeA$, and there is a directed edge from a node \nodeA to a node \nodeB if and only if $\Successor(\nodeA) = \nodeB$ and $c(\nodeB) > c(\nodeA)$ (where the result of \cost is interpreted as an $m$-bit positive integer). 
Thus, each node in \Graph has out-degree of at most $1$, and the circuit \Successor computes the candidate \emph{successor} of a node. The circuit $\cost$ assigns a positive \emph{cost} (also called \emph{potential}) from $\{0, \dots, 2^m - 1\}$ to every node, and all edges go in the direction of strictly increasing cost.
Furthermore, we define that the bit string $0^\dimension$ is a node of \Graph, and that $\cost(0^\dimension) = 0$.
These properties ensure that \Graph is a collection of directed paths, which we call \emph{lines}, and that $0^\dimension$ is the start vertex of a line.
Our computational task is as follows: if the nodes of \Graph form a single line (that necessarily starts in $0^\dimension$), then we should find the unique end node of this line --- the \emph{sink}. 
Otherwise, we should find a \emph{violation certificate} that shows that \Graph does not consist of a single line. 
\UniqueForwardEOPL is a total search problem. There always exists a valid sink or a violation. 
Note that there might exist a valid sink and a violation simultaneously.
The promise version of \UniqueForwardEOPL is: under the promise that no violations exist for the given instance, find the unique end of the line.
The formal definition of the total search problem \UniqueForwardEOPL is as follows:

\begin{definition} (\cite[Definition~10]{UEOPL2020})
Let \f{\dimension, m \in \Naturals^+} with \f{m\geq \dimension.}
Given Boolean circuits 
\f{\Successor : \{0, 1\}^\dimension \rightarrow \{0, 1\}^\dimension} and 
\f{\cost: \{0, 1\}^\dimension \rightarrow \{0, 1, \dots, 2^m -1 \}} which must have the property that 
\f{\Successor(0^\dimension)\neq 0^\dimension}  and
\f{\cost(0^\dimension) = 0}, find one of the following:

\begin{itemize}[leftmargin=1.5cm]
\item[\UFEOPLSolTypeEndOfLine] A bit string \f{\nodeA \in \{0, 1\}^\dimension} 
with 
\f{\Successor(\nodeA) \neq \nodeA} and either
\f{\Successor(\Successor(\nodeA)) = \Successor(\nodeA)}
or
\f{\cost(\Successor(\nodeA)) \leq \cost(\nodeA)}. 
Then, $\Successor(\nodeA)$ is not a valid node and \nodeA is a sink node in \Graph and thus a valid solution.
\item[\UFEOPLVioTypeNotALine] Two bit strings \f{\nodeA, \nodeB \in \{0, 1\}^\dimension} with \f{\nodeA \neq \nodeB}, 
\f{\Successor(\nodeA)\neq \nodeA} , \f{\Successor(\nodeB) \neq \nodeB} and either 
\emph{(a)} \f{\cost(\nodeA)=\cost(\nodeB}) or 
\emph{(b)} \f{\cost(\nodeA) < \cost(\nodeB) < \cost(\Successor(\nodeA))}.
Then, \nodeA and \nodeB are two different nodes that violate the promise of the strictly increasing potential.

\item[\UFEOPLVioBreakInLine] Two nodes \f{\nodeA,\nodeB \in \{0, 1\}^\dimension} such that \nodeA is a solution of type \UFEOPLSolTypeEndOfLine, \f{\nodeA \neq \nodeB}, \f{\Successor(\nodeB)\neq \nodeB} and \f{\cost(\nodeA) < \cost(\nodeB)}.
This encodes a break in the line. The node \nodeA is the end of one line, but there exists a different line with a node \nodeB that has higher cost than \nodeA.
\end{itemize}
Two examples of an instances with no violations are shown in Figure \ref{NoVios} and an instance with all types of violations can be seen in Figure \ref{Vios}.
\end{definition}

\begin{figure}
	
	Candidate solution space with \f{\dimension=2}.
	\centering
	\includegraphics{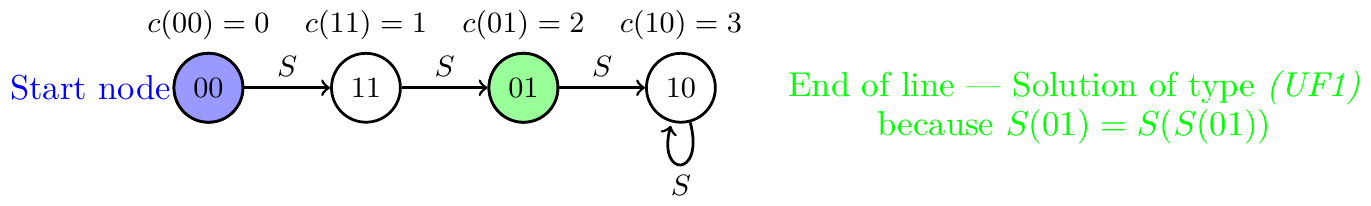}
	\includegraphics{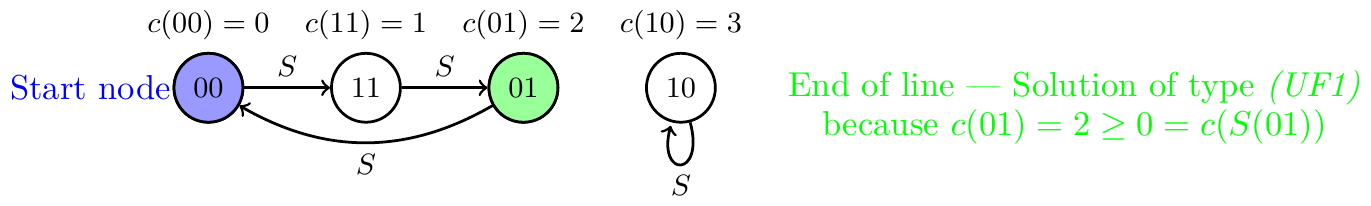}

	\caption{\UniqueForwardEOPL instances that form a valid line without violations.}
	\label{NoVios}
\end{figure}
	
\begin{figure}
Candidate solution space: \f{\dimension=3}
\centering
\includegraphics{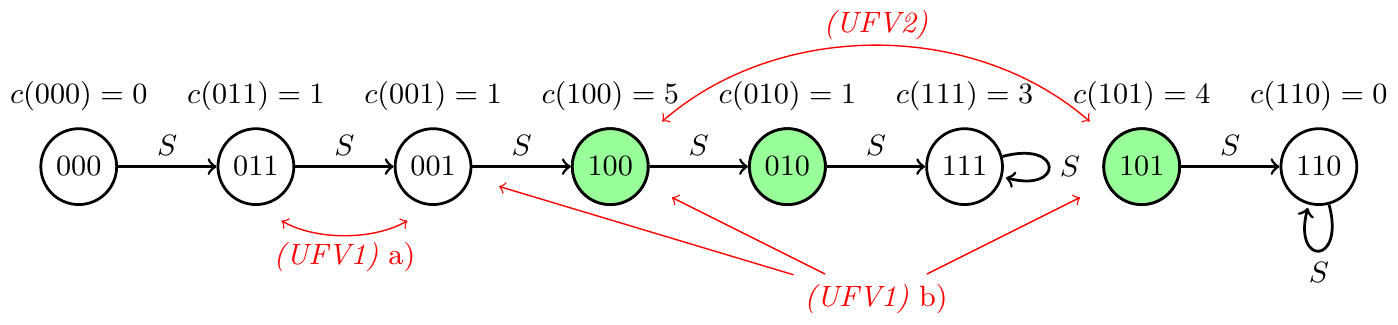}
\caption{A \UniqueForwardEOPL instance line with all types of violations.}
\label{Vios}
\end{figure}

\begin{definition}(\cite[Definition~7]{UEOPL2020})
Let $\searchProblemA$ be a search problem, and $\Instances{\searchProblemA}\subseteq \{0, 1\}^*$ be the set of all instances for \searchProblemA. 
For an instance \f{\instance \in \Instances{\searchProblemA}}, let \f{\Solutions{\searchProblem}(\instance)} be the set of candidate solutions of $\instance$.
A search problem \searchProblemA can be reduced by a \emph{promise preserving Karp reduction in polynomial time} to a search problem \searchProblemB if there exist two polynomial-time functions 
\f{f: \Instances{\searchProblemA} \rightarrow \Instances{\searchProblemB}} and
\f{g: \Instances{\searchProblemA} \times \Solutions{\searchProblemB}(f ( \instanceA )) \rightarrow \Solutions{\searchProblemA}(\instanceA )}
such that 
if \solutionB is a violation of \f{f(\instance)}, then \f{g(\instanceA, \solutionB)} is a violation of \instanceA and
if \solutionB is a valid solution of \f{f(\instance)}, then \f{g(\instanceA, \solutionB)} is a valid solution or a violation of \instanceA.
We are given an instance of \searchProblemA from which we construct with $f$ an instance of problem \searchProblemB. If we then solve \searchProblemB, we can re-translate the solution from \searchProblemB to a solution of \searchProblemA with $g$.
Promise preserving reductions are transitive.
\end{definition}

\begin{definition} (\cite{UEOPL2020})
The search problem complexity class \UEOPL contains all problems that can be reduced in polynomial time to \UniqueForwardEOPL. 
Thus, the complexity class \UEOPL captures all total search problems where the space of candidate solutions has the structure of a unique line with increasing cost. The relationship of \UEOPL to other classes is shown in Figure \ref{relationship}.
\end{definition}

\PUEOPL is the promise version of the search problem class \UEOPL. When containment of a search problem \searchProblem in \UEOPL is shown via a promise preserving reduction, the promise version of \searchProblem is contained in \PUEOPL.

\begin{figure}[ht!]
\centering
\includegraphics{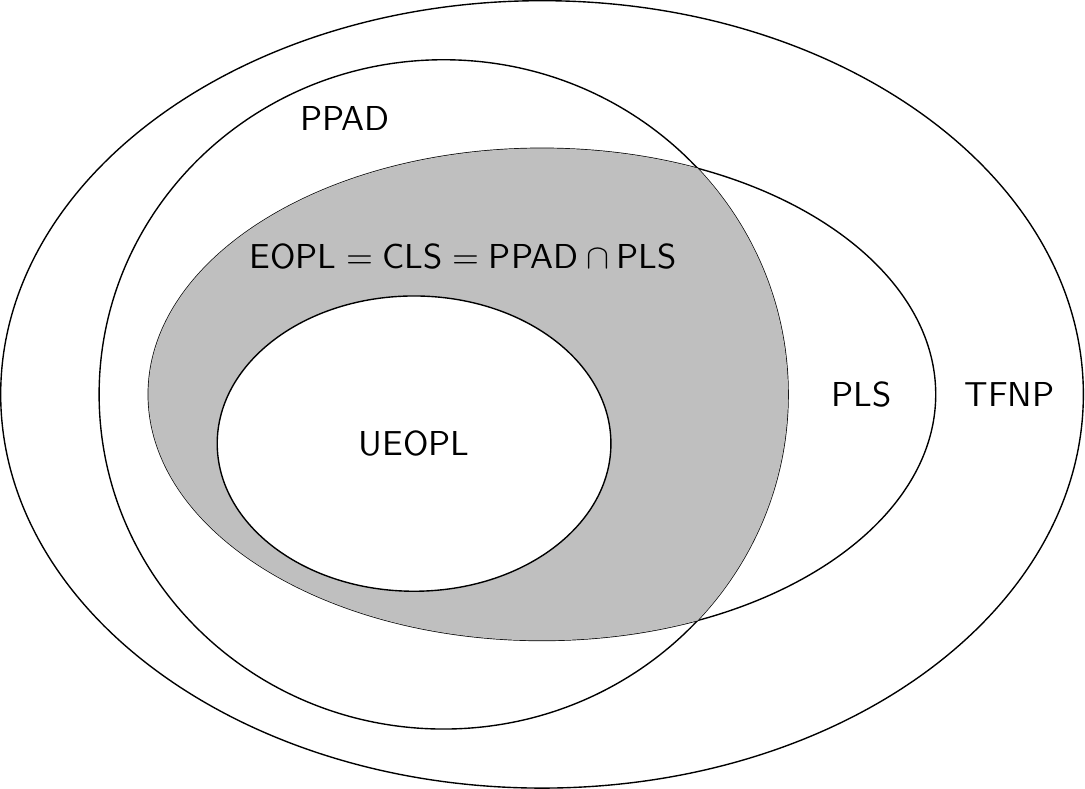}
\caption{Relation of \UEOPL to other search problem complexity classes according to \cite{EOPL=CLS}.}
\label{relationship}
\end{figure}

\section{Unique Sink Orientations of Grids}

\begin{definition}
(\cite[p.~206]{GridUSO})
Let \f{\length, \dimension \in \Naturals^+}. 
Let \f{\Menge=\{1, \dots, \length \}} be an ordered set of integers called \emph{directions} and \f{\Partitions=(\partition_1, \dots, \partition_\dimension)} be a partition of \Menge with \f{\partition_i} being ordered and \f{\abs{\partition_i}\geq 2} for all \emph{dimensions} \f{i=1, \dots, \dimension}. 
The \emph{\dimension-dimensional grid} \Grid is the undirected graph derived from \f{\Grid=(\Menge, \Partitions)} with a set of vertices
\f{\Vertices := \{ \vertex \subseteq \Menge \mid i=1, \dots, \dimension, \abs{\vertex \cap \partition_i} = 1 \}} and a set of edges
\f{\Edges := \{ \{\pointA, \pointB\} \mid \pointA, \pointB \in \Vertices, |\pointA \xor \pointB| = 2 \}},
where $\xor$ denotes the \emph{symmetric difference}.

\end{definition}

\begin{definition}
(\cite[p.~211]{GridUSO})
The \emph{outmap function} \f{\outmap: \Vertices \rightarrow \Potenzmenge{\Menge}} defines an \emph{orientation} of the edges of a grid $\Grid$.
For each point \f{\pointA \in \Vertices}, the set  $\sigma(p)$ 
contains all directions to which \pointA has outgoing edges.
The edges of \pointA for all other directions are incoming.
In particular, we have $\sigma(p) \cap p = \emptyset$.
An outmap \outmap is called \emph{unique sink orientation} of \Grid if all nonempty induced subgrids of \Grid have a unique sink.
\end{definition}

\begin{definition}
(\cite[Definition~2.13]{GridUSO})
The \emph{refined index} \f{\refinedIndex_\outmap} of an outmap \f{\outmap} with \f{\refinedIndex_\outmap\colon \Vertices \rightarrow \{0, \dots, \abs{\partition_1}-1\} \times \dots \times \{0, \dots, \abs{\partition_\dimension}-1\}} is defined as: \f{\refinedIndex_\outmap(\pointA) := (\abs{\outmap(\pointA) \cap \partition_1}, \dots, \abs{\outmap(\pointA) \cap \partition_\dimension})}.
The refined index assigns to each point a \f{\dimension}-tuple containing at index \f{i} the number of outgoing edges in dimension \f{i}.
\end{definition}

\begin{theorem}
(\cite[Theorem~2.14]{GridUSO}) If \outmap is a unique sink orientation, then \f{\refinedIndex_\outmap} is a bijection.
\end{theorem}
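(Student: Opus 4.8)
The plan is to reduce the statement to an injectivity claim and then prove injectivity by induction on the dimension $d$. First I would observe that the domain and codomain of $r_\sigma$ have the same (finite) cardinality: a vertex is determined by choosing one element from each block $\kappa_i$, so $|V| = \prod_{i=1}^d |\kappa_i|$, which is exactly the size of the codomain $\prod_{i=1}^d \{0,\dots,|\kappa_i|-1\}$. Hence it suffices to show that $r_\sigma$ is injective; surjectivity, and thus bijectivity, then follows for free.

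The key local ingredient is the structure of one-dimensional subgrids. Fixing the coordinates in all dimensions except $i$ and letting the $\kappa_i$-coordinate range over all of $\kappa_i$ yields an induced subgrid that is a complete graph on $|\kappa_i|$ vertices, i.e.\ a tournament. If this tournament contained a directed triangle, the corresponding three vertices would form a nonempty induced subgrid with no sink, contradicting that $\sigma$ is a unique sink orientation; hence every such line is an acyclic (transitive) tournament. In a transitive tournament on $|\kappa_i|$ vertices the out-degrees are exactly $0,1,\dots,|\kappa_i|-1$, so along every line in direction $i$ the $i$-th coordinate of $r_\sigma$, namely $|\sigma(\cdot)\cap\kappa_i|$, takes each value in $\{0,\dots,|\kappa_i|-1\}$ exactly once. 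This already settles the base case $d=1$ (where $\Gamma$ is a single line) and gives a useful ``Latin'' constraint in each coordinate.

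For the inductive step I would slice $\Gamma$ along dimension $d$: for each $a\in\kappa_d$ the set $V_a$ of vertices whose $\kappa_d$-coordinate equals $a$ is a nonempty induced subgrid, hence itself a unique sink orientation of a $(d-1)$-dimensional grid (every induced subgrid of $V_a$ is an induced subgrid of $\Gamma$ and thus has a unique sink). By the induction hypothesis the restriction of the first $d-1$ coordinates of $r_\sigma$ to $V_a$ is a bijection. Consequently two vertices lying in the same slab and sharing a refined index must coincide, and the only collisions left to exclude are between two vertices $p,q$ in different slabs that agree in the first $d-1$ coordinates of $r_\sigma$ and also in the last coordinate $|\sigma(\cdot)\cap\kappa_d|$.

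The hard part will be ruling out exactly these cross-slab collisions. The one-dimensional ``Latin'' property does not apply directly, because the two vertices realizing a prescribed $(d-1)$-tuple in their respective slabs need not lie on a common line in direction $d$; indeed, one can check on small examples that the per-slab sinks need not be aligned, so joint injectivity is a genuinely global consequence of the unique sink condition and not merely of its restriction to lines. To attack this, I would take a collision with inclusion-minimal hull (the smallest subgrid containing both vertices); applying the inductive hypothesis to the subgrid obtained by fixing any coordinate on which $p$ and $q$ agree forces such a minimal pair to differ in every dimension, so its hull is a subcube in which the two vertices are antipodal. I would then try to combine the bijectivity of $r_\sigma$ on this subcube (the case where all blocks have size two) with the transitive-tournament structure along the lines, using the unique sinks of the intermediate two-dimensional subgrids to exhibit a second sink in a suitable induced subgrid and thereby contradict the unique sink orientation property. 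Turning this orthogonality-type argument into a clean contradiction --- that is, genuinely exploiting the unique sinks of \emph{all} subgrids, not just the one-dimensional ones --- is the crux of the proof.
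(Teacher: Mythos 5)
The paper states this theorem without proof, importing it from Gärtner et al.~\cite{GridUSO}, so there is no in-paper argument to compare against; I therefore assess your proposal on its own terms. Your preparatory steps are correct: the domain and codomain both have cardinality $\prod_{i=1}^d |\kappa_i|$, so injectivity suffices; every one-dimensional induced subgrid is a tournament that must be transitive (a directed triangle would be a nonempty induced subgrid with no sink), so along each line in direction $i$ the quantity $|\sigma(\cdot)\cap\kappa_i|$ takes each value in $\{0,\dots,|\kappa_i|-1\}$ exactly once; and the refined index restricts coherently to the slab obtained by fixing one block to a singleton, so induction does force any colliding pair to differ in every dimension. Up to this point the argument is sound.

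However, the proposal does not prove the theorem: the cross-slab case --- two vertices differing in every dimension with equal refined index --- is precisely the content of the statement, and you explicitly leave it open (``turning this \dots\ into a clean contradiction \dots\ is the crux of the proof''). Moreover, the sketched reduction to an antipodal pair in a subcube is not sound as stated: unlike the restriction to slabs, the refined index does \emph{not} restrict to the hull spanned by $p$ and $q$, since $|\sigma(p)\cap\kappa_i|$ and $|\sigma(p)\cap\kappa_i\cap(p\cup q)|$ are different quantities, so a global collision need not induce a collision of the hull's own refined index; and when $\Gamma$ is itself a cube the hull may be all of $\Gamma$, leaving nothing smaller to induct on. Some genuinely global use of the unique-sink property of \emph{all} induced subgrids is required --- for instance, exploiting that a vertex with refined index $a$ is the unique sink of exactly $\prod_i 2^{|\kappa_i|-1-a_i}$ induced subgrids, or a two-vertex ``distinguishing direction'' lemma in the style of Szab\'o--Welzl --- and no such ingredient is supplied. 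As written, the proposal is a correct reduction of the theorem to its hard case, not a proof of it.
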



\begin{figure}[ht!]
\centering
\includegraphics{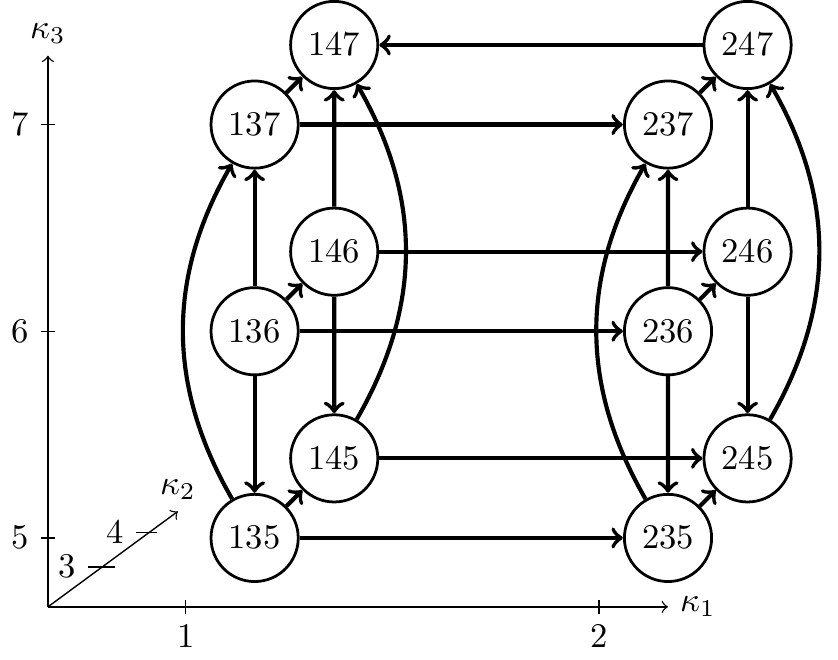}
\caption{Example grid \Grid with \f{\Menge = \{ 1, \dots, 7\}}, \f{\partition_1= \{1, 2\}}, \f{\partition_2 = \{3, 4\}} and \f{\partition_3 = \{5, 6, 7\}}. The orientation of the edges is a unique sink orientation. The unique sink is the point \f{(147)}. The outmap of point \f{(246)} is \f{\outmap(246) = \{ 5, 7 \}} and its refined index is \f{\refinedIndex_\outmap(246) = (0, 0, 2)}.}
\label{grid}
\end{figure}


\begin{definition} 
(\cite[Definition~6.1.23]{MA})
The search problem \GridUSO is defined as follows:
Given a \dimension-dimensional grid, represented implicitly by \f{\Grid = (\Menge, \Partitions)}, and a circuit computing an outmap function \f{\outmap \colon \Vertices \rightarrow \Potenzmenge{\Menge}}, find one of the following:
\begin{itemize}
\setlength{\itemindent}{1cm}

\item[\GridUSOSolTypeEndOfLine] A point \f{\pointA \in \Vertices} with \f{\outmap(\pointA) = \emptyset}.
The point \pointA is a sink.

\item[\GridUSOVioSelfLoop] A point \f{\pointA \in \Vertices} with \f{\pointA \cap \outmap(\pointA) \neq \emptyset}.
The point \pointA has a directed edge to itself, thus it is a certificate of $\sigma$ not being a valid unique sink orientation.

\item[\GridUSOVioRefinedIndex]  An induced subgrid $\subGrid = (\Menge', \Partitions')$ with \f{\outmap'(\pointA) := \outmap(\pointA) \cap \Menge'} and two points \f{\pointA, \pointB \in \Vertices'} with \f{\pointA \neq \pointB} and \f{\refinedIndex_{\outmap'}(\pointA) = \refinedIndex_{\outmap'}(\pointB)}.
The points \pointA, \pointB and the subgrid \subGrid are a polynomial time verifiable certificate 
that \f{\refinedIndex_{\outmap'}} is not a bijection and thus, \outmap not a unique sink orientation.

\end{itemize}

\end{definition}
	
\GridUSO is a total search problem.
Under the promise that the outmap \outmap is a unique sink orientation, the unique sink will be found and returned as solution \GridUSOSolTypeEndOfLine. 
If \outmap is \emph{not} a unique sink orientation, there exists at least one of the violations \GridUSOVioSelfLoop or \GridUSOVioRefinedIndex. 
An example instance can be seen in Figure \ref{grid}.
Unique sink orientations on grids were introduced by Gärtner et al. \cite{GridUSO} as a combinatorial abstraction of linear programming over products of simplices and the generalized linear complementarity problems over P-matrices.
There is no polynomial time algorithm known to solve \GridUSO.


\section{\GridUSO is in \UEOPL}

\begin{theorem}
\GridUSO can be reduced via a promise preserving reduction to \UniqueForwardEOPL.
\end{theorem}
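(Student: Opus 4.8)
The plan is to realize the search for the grid sink as the execution trace of a deterministic, recursive sink-finding procedure, and to encode this trace as the unique line of a \UniqueForwardEOPL instance. The procedure rests on two structural facts about unique sink orientations. First, a one-dimensional grid on a single partition class $\partition_i$ is the complete graph on $\partition_i$, and an orientation in which every nonempty subset has a unique sink is exactly a transitive tournament; hence each dimension carries a \emph{total order}, with the sink as its minimum and an outgoing edge from every element to its immediate predecessor. Second, I would use the telescoping property of USOs: fixing the coordinates of a set of dimensions and replacing each resulting fiber by its (recursively computed) sink again yields a unique sink orientation on the remaining dimensions; this is the structural companion of the refined-index bijection of Theorem~2.14. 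Together these let us optimize dimension by dimension: drive the lowest ``active'' dimension one step toward the minimum of its induced total order, re-solve the lower dimensions, and repeat; the global sink is reached exactly when no dimension admits an improving step.

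First I would fix the state space of the \UniqueForwardEOPL instance to be an encoding of a grid vertex $\pointA \in \Vertices$ together with a small amount of bookkeeping recording the currently active dimension, and define the successor circuit $\Successor$ to perform one canonical pivot: at $\pointA$, locate the smallest dimension $i$ in which $\pointA$ is not yet at the sink of its fiber, follow the outgoing grid edge in dimension $i$ that decrements $\pointA$'s position by one step in that dimension's induced total order, and reset the lower dimensions to their fiber starts. The initial bit string $0^\dimension$ is mapped to a fixed canonical start vertex (the recursively defined source, sitting at the maximum of every induced total order), and the terminal state, in which no improving dimension exists, is mapped to the grid sink, yielding a solution of type \GridUSOSolTypeEndOfLine.

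Next I would define the cost circuit $\cost$ as a \emph{mixed-radix progress potential}: record, for each dimension $i$, the position of $\pointA$'s current fiber in the induced total order of that dimension, weight higher dimensions more significantly with radix $\abs{\partition_i}$, and set $\cost(\pointA)$ to a fixed large constant minus the resulting mixed-radix number. Because the weight of dimension $i$ strictly exceeds the combined weight of all lower dimensions, a single canonical pivot --- which decrements the active dimension's position by one while possibly resetting the lower ones --- strictly increases $\cost$; this is the device that replaces the natural objective potential, which need not exist since USOs may be cyclic. One then checks $\cost(0^\dimension)=0$ and that, under the promise that $\outmap$ is a USO, every reachable state has a unique predecessor, so that the construction produces a single line from $0^\dimension$ to the sink with no violations.

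Finally, for the reverse map $g$ and promise preservation, I would argue that any \UniqueForwardEOPL violation returned by the solver can only arise when $\outmap$ fails to be a unique sink orientation, and extract the corresponding \GridUSO certificate. A self-loop encountered while evaluating $\Successor$ yields \GridUSOVioSelfLoop directly. Two states of equal cost \UFEOPLVioTypeNotALine, or a break in the line \UFEOPLVioBreakInLine, correspond to an induced subgrid whose fiber structure fails to be a total order, i.e.\ lacks a unique sink; by the refined-index bijection of Theorem~2.14 this localizes to two distinct points of that subgrid with $\refinedIndex_{\outmap'}(\pointA) = \refinedIndex_{\outmap'}(\pointB)$, giving \GridUSOVioRefinedIndex. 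I expect the main obstacle to be exactly this last correspondence together with the monotonicity of $\cost$: proving that the serialized recursion is a single line and that each of the three \UniqueForwardEOPL outcomes maps back, \emph{verifiably in polynomial time}, to the correct \GridUSO outcome, since the re-entry and reset of lower dimensions makes both the potential bookkeeping and the violation extraction delicate.
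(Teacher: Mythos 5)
Your high-level plan --- serialize a deterministic recursive sink-finding procedure into a single line --- matches the paper's, but your concrete choice of state and potential has a gap that I believe is fatal as stated. You keep only a single grid vertex (plus the active dimension) as the state, and you define $\cost$ from the position of each coordinate in the total order of its \emph{current} fiber, which is just the refined-index entry $\abs{\outmap(\pointA)\cap\partition_j}$. This quantity is not monotone along your successor path: when you pivot in the smallest non-optimal dimension $i$, the new point agrees with the old one on all dimensions $j>i$ but differs on dimensions $\le i$, so for each $j>i$ you land in a \emph{different} fiber of dimension $j$, whose transitive-tournament order need not be consistent with the old one. Since these dimensions carry the most significant digits of your mixed-radix potential, the cost can drop. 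A concrete counterexample already exists on the $2\times 2$ grid with $\partition_1=\{1,2\}$, $\partition_2=\{3,4\}$: orient $\{2,3\}$ as the source, $\{1,3\}$ as the sink, $\{2,4\}\to\{1,4\}\to\{1,3\}$ and $\{2,3\}\to\{2,4\}$. Starting at $\{2,4\}$ your rule pivots in dimension $1$ to $\{1,4\}$; the dimension-$1$ position improves from $1$ to $0$, but the dimension-$2$ out-degree jumps from $0$ to $1$, so the dominant digit worsens and $\cost$ decreases. The monotone repair --- measuring positions in the \emph{inherited} orders obtained by collapsing each lower-dimensional subgrid to its sink --- is not polynomial-time computable from your state, because it requires the subgrid sinks, i.e.\ the objects being searched for. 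This is precisely why the paper's states are $(\length+1)$-tuples that \emph{store} the previously computed subgrid sinks, with $\cost$ read off the stored points; fixing your construction essentially forces you back to that design.

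A second, related gap is the single-line property. In \UniqueForwardEOPL every bit string $v$ with $\Successor(v)\neq v$ is a node, so well-formed states that do not lie on your canonical trajectory spawn additional lines, which are genuine violations of type \UFEOPLVioTypeNotALine or \UFEOPLVioBreakInLine even when $\outmap$ is a valid USO; such violations cannot be mapped back to a \GridUSO certificate, so promise preservation fails. You therefore need a polynomial-time membership test for ``this state lies on the intended line'' (the paper's \isVertex), and with only a vertex plus an active dimension there is no apparent way to certify this --- for the same reason the potential fails, membership depends on the history of subgrid sinks. Two smaller points: your canonical start (``the maximum of every induced total order,'' defined recursively through inherited orders) is likewise not computable, though any fixed corner of the grid would do; and your claim that a break in the line localizes to a refined-index violation is the right target but is exactly the nontrivial content of the paper's Lemma~\ref{oneOfThemIsSink}, not something that follows from Theorem~2.14 alone.
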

\begin{proof}

Given an instance \f{\instanceA=(\Grid, \outmap)} of \GridUSO, we construct one instance of  \UniqueForwardEOPL \f{\instanceB = (\Successor, \cost)} such that solutions and violations can be mapped accordingly.


\subparagraph{Idea of the reduction.}

We construct a line following algorithm that finds for any given grid its unique sink or a violation, such that each step can be calculated in polynomial time.
The vertices of \instanceB are encodings of the states of this line following algorithm. The successor function \Successor calculates the next state.
To do so, it is allowed to call the outmap function \outmap from the \GridUSO instance.


\begin{lemma}
\label{oneOfThemIsSink}
(\cite[Lemma~6.3.5]{MA})
Given an outmap \outmap on a grid \Grid, two disjoint induced subgrids \GridA and \GridB of \Grid whose union is again a valid subgrid, and their respective unique sinks \blueSink and \yellowSink, then \emph{(a)} the unique sink of the grid \f{(\GridA \cup \GridB)} is either \blueSink or \yellowSink or \emph{(b)} \outmap is not a unique sink orientation and we found a violation of \GridUSO.
\end{lemma}


\begin{figure}[ht!]
\centering
\includegraphics{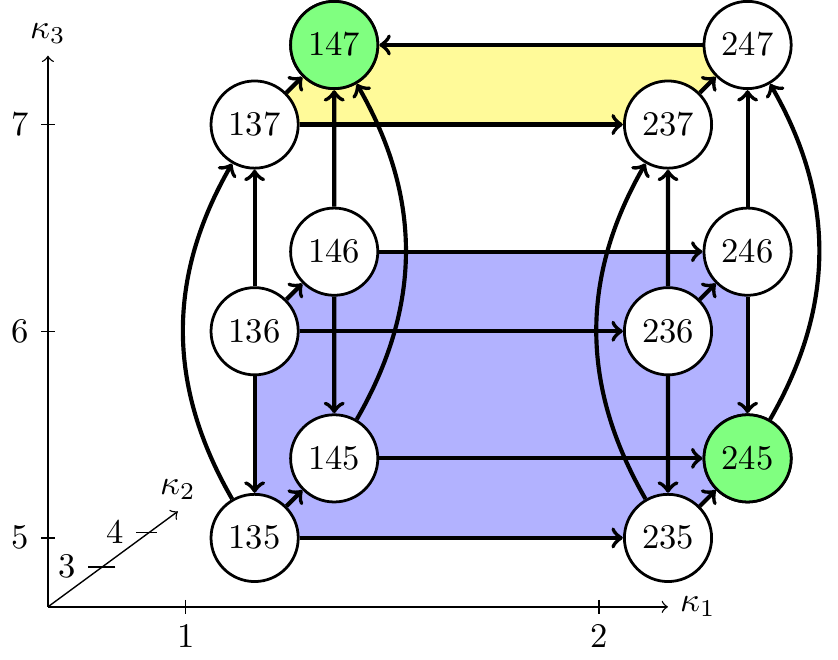}
\caption{Step of the line-following algorithm: either \f{(245)} or \f{(147)} is the sink of the whole grid.}
\label{subgridsAndSinks}
\end{figure}


The line following algorithm starts at the bottom left point of the grid.
It iterates over all directions, in each step looking at the subgrid that is formed by the union of the previous directions (e.g., the subgrid colored in blue in Figure~\ref{subgridsAndSinks}) and its sink \blueSink.
By adding the next lexicographic direction,
we add another subgrid (e.g., the yellow subgrid in Figure~\ref{subgridsAndSinks}) for which we can find the sink \yellowSink recursively.
The sink of the grid which is the union of the blue and the yellow subgrids is either \blueSink or \yellowSink. 
None of the other nodes in the subgrids is a sink of the respective subgrid, and thus not a sink of the combined grid.
If neither \blueSink nor \yellowSink is a sink, then it can be proven that there is a refined index violation of type \GridUSOVioRefinedIndex.

\newcommand{\FindSink}{\texttt{find\_sink}}

\begin{algorithm}[H]

	\label{LineFollowingAlgo}
	\caption{\FindSink \f{(\outmap, \Menge, \partition_1, \dots, \partition_\dimension)}}		
	\blueSink := \f{((\partition_{1})_1, \dots, (\partition_\dimension)_1)}\;
	\For{\f{i \in \Menge}}{
	\label{step1}
		\If{\f{i \notin \outmap(\blueSink)} \tcp*[f]{\blueSink is also sink in $i$'th subgrid}}{ 
			\label{check1}
			continue with next \f{i}\;
		}
		Let \f{j} be the index such that \f{i \in \partition_j} \;
		\f{\Menge'} := \f{(\Menge \setminus \partition_j) \cup \{ i \}} \;
		\yellowSink := \FindSink(\f{\outmap, \Menge', (\partition_1 \cap \Menge'), \dots, (\partition_\dimension \cap \Menge') }) \tcp*[f]{Search recursively for sink in yellow subgrid} \;  
		\For{\f{k \in \partition_j \wedge k \leq i}}{
			\label{step2}
			\If{\f{k \in \outmap(\yellowSink)} \tcp*[f]{If \yellowSink is not sink in $i$'th subgrid}}{
				return Violation\;
			}
		}
		\blueSink := \yellowSink\;
	}
	return \blueSink\;
\end{algorithm}

\subparagraph{Construction of the vertices.}

Each node of \instanceB is the bit-encoding of an $(\length+1)$-tuple of points of the grid. Let \f{\Vertices' := ( \Vertices \cup \{ \bot \})^{\length + 1}}.
Its contents encode the state of Algorithm \ref{LineFollowingAlgo}.
The points stored in the vertices are the unique sinks of the subgrids in which Algorithm \ref{LineFollowingAlgo} searches recursively. 
The position corresponds to the directions through which the algorithm iterates. Thus, we can identify for each tuple which step of the algorithm it represents.
Let the start node be the $(\length+1)$-tuple consisting of the bottom left point of the grid and \length many $\bot$'s: \f{(((\partition_1)_1, \dots, (\partition_\dimension)_1), \bot, \dots, \bot)}. 
We define the function \isVertex which checks in polynomial time, whether any given $(\length+1)$-tuple is a valid step of Algorithm \ref{LineFollowingAlgo}, the representation of a \GridUSO violation or not a proper encoding at all.

\subparagraph{Construction of the successor function \Successor.}

The successor function \Successor checks for the given node whether it encodes a valid step of Algorithm \ref{LineFollowingAlgo} and if so, calculates the next step in polynomial time.
Each step, including the steps of the recursive calls, is a separate node on the resulting unique line.
The line corresponds to traversing the tree of the call hierarchy.
Let \f{\Successor \colon \Vertices' \rightarrow \Vertices'}.
Given a vertex \f{\nodeA = (\pointA_1,  \dots, \pointA_\length, \pointA_{\length +1})}, let \f{\Successor(\nodeA)} be:
\begin{enumerate}
	\item If \f{\isVertex(\nodeA)} says \nodeA is \emph{not} a valid encoding, then set \f{\Successor(\nodeA) := \nodeA}.
	\item If \f{\isVertex(\nodeA)} says \nodeA is a valid encoding and not a violation:
	\begin{enumerate}
		\item If the node has the form \f{\nodeA= (\bot, \dots, \bot, \pointA_{\length +1})}
		it encodes the end of the for-loop in line \ref{step1} of Algorithm \ref{LineFollowingAlgo}. Thus, the algorithm returns the sink \f{\pointA_{\length +1}}. Thus, set \f{\Successor(\nodeA):=\nodeA} to indicate the end of the line.
		
		\item If the node has the form \f{\nodeA= (\bot, \dots, \bot, \pointA_{i},\pointA_{i+1},\pointA_{i+2}, \dots, \pointA_{\length +1})}, then \f{\pointA_{i}} is the sink \blueSink of the \f{i}'th iteration of the for-loop in line \ref{step1}.

		\begin{enumerate}
			\item  If the check in line \ref{check1} is true, we know that \blueSink is also the sink of the next bigger subgrid. Thus, set \f{\Successor(\nodeA):=(\bot, \dots, \bot,\bot, \pointA_{i},\pointA_{i+2}, \dots, \pointA_{\length +1})}.

			\item If the check in line \ref{check1} is false,
			then we know that \blueSink is \emph{not} the sink of the next bigger subgrid. By Lemma \ref{oneOfThemIsSink}, we must search recursively for the sink of the yellow subgrid.
			But we want to remember \blueSink, so that we can identify a violation if one exists.
			Thus, set \f{\Successor(\nodeA):=(s, \bot, \dots, \bot, \pointA_{i},\pointA_{i+1}, \dots, \pointA_{\length +1})}, where \f{s} is the start point of the subgrid in the recursive call.
		\end{enumerate}	
	\end{enumerate}
\end{enumerate}

\subparagraph{Construction of the cost function \cost.}
Let $\gridwidth = \length +2$ and \f{\help \colon \Vertices' \times \{ 1, ..., \length \} \times \{ 1, \dots, \dimension \}\rightarrow \{0,\dots,\gridwidth -1 \}} a help function with
\begin{equation}
\help(\node, \ii, \ij) := \begin{cases}
0 & \text{if }\point_\ii= \bot, \\
\gridwidth - 1 & \text{if } \outmap(\point_\ii) \cap \{ 1, \dots, \ii \} = \emptyset, \text{i.e., step \emph{(2.b.i)} holds,}\\
(\point_\ii)_\ij  & \text{otherwise.}\end{cases}		
\end{equation}
Because the grid works with ordered sets, lexicographically bigger points have a higher value in \help. 
Also, if two points are sink of their corresponding subgrid, they have the same value in \help.
The help values of each point and every dimension are scaled and summed up.
The cost of a node then is the sum of these scaled values,
where each summand is scaled again to enforce that the value for a point at position $i$ which is not $\bot$ is always higher than the sum of all values of points with indices smaller than $i$.
Thus, we always give steps later on in the algorithm higher cost.
\begin{equation}
\cost(\vertex) :=  \left( \sum_{i = 1}^{\length} \left( \gridwidth^{i-1} \cdot \sum_{j = 1}^{\dimension} \gridwidth^j \help(\node, i, j) \right) \right) + \begin{cases}
		0 & \text{if } \point_{\length + 1} = \bot, \\
		\gridwidth^{\length \dimension + 1} & \text{if } \point_{\length + 1} \neq \bot.
\end{cases}
\end{equation}


\subparagraph{Correctness}
The successor function and the cost function can be constructed in polynomial time.
It can be proven that every solution of type \GridUSOSolTypeEndOfLine is only mapped to solutions of type \UFEOPLSolTypeEndOfLine. 
Every violation of the created \UniqueForwardEOPL instance can be mapped back to a violation of the \GridUSO instance.
Therefore this reduction is promise preserving.
It follows that \GridUSO is in \UEOPL and the promise version of \GridUSO is in \PUEOPL.
The full proof can be found in \cite[Proof of Theorem~6.3.1]{MA}.
\end{proof}

\bibliography{literature}

\begin{thebibliography}{1}

\bibitem{ComputationalComplexityModern}
Sanjeev Arora and Boaz Barak.
\newblock {\em Computational Complexity: A Modern Approach}.
\newblock Cambridge University Press, USA, 1st edition, 2009.

\bibitem{MA}
Michaela Borzechowski.
\newblock The complexity class unique end of potential line.
\newblock Master's thesis, Freie Universität Berlin, 2021.
\newblock URL:
  \url{https://www.mi.fu-berlin.de/inf/groups/ag-ti/theses/download/Borzechowski21.pdf}.

\bibitem{HamSandwich}
Man{-}Kwun Chiu, Aruni Choudhary, and Wolfgang Mulzer.
\newblock Computational complexity of the $\alpha$-ham-sandwich problem.
\newblock In {\em ICALP 2020}, pages 31:1--31:18, 2020.
\newblock \href {https://doi.org/10.4230/LIPIcs.ICALP.2020.31}
  {\path{doi:10.4230/LIPIcs.ICALP.2020.31}}.

\bibitem{UEOPL2020}
John Fearnley, Spencer Gordon, Ruta Mehta, and Rahul Savani.
\newblock Unique end of potential line.
\newblock {\em Journal of Computer and System Sciences}, 114:1 -- 35, 2020.
\newblock \href {https://doi.org/https://doi.org/10.1016/j.jcss.2020.05.007}
  {\path{doi:https://doi.org/10.1016/j.jcss.2020.05.007}}.

\bibitem{ARRIVAL3}
Bernd G{\"{a}}rtner, Thomas~Dueholm Hansen, Pavel Hub{\'{a}}{\v{c}}ek, Karel
  Kr{\'{a}}l, Hagar Mosaad, and Veronika Sl{\'{i}}vov{\'{a}}.
\newblock {ARRIVAL:} next stop in {CLS}.
\newblock {\em CoRR}, abs/1802.07702, 2018.
\newblock URL: \url{http://arxiv.org/abs/1802.07702}.

\bibitem{GridUSO}
Bernd G{\"a}rtner, D~Walter~Jr, Leo R{\"u}st, et~al.
\newblock Unique sink orientations of grids.
\newblock {\em Algorithmica}, 51(2):200--235, 2008.
\newblock \href {https://doi.org/https://doi.org/10.1007/s00453-007-9090-x}
  {\path{doi:https://doi.org/10.1007/s00453-007-9090-x}}.

\bibitem{EOPL=CLS}
Mika Göös, Alexandros Hollender, Siddhartha Jain, Gilbert Maystre, William
  Pires, Robert Robere, and Ran Tao.
\newblock Further collapses in tfnp, 2022.
\newblock \href {http://arxiv.org/abs/2202.07761} {\path{arXiv:2202.07761}}.

\bibitem{Weiss2}
Nimrod Megiddo and Christos~H Papadimitriou.
\newblock On total functions, existence theorems and computational complexity.
\newblock {\em Theoretical Computer Science}, 81(2):317--324, 1991.
\newblock \href {https://doi.org/https://doi.org/10.1016/0304-3975(91)90200-L}
  {\path{doi:https://doi.org/10.1016/0304-3975(91)90200-L}}.

\end{thebibliography}

\end{document}